\documentclass[conference]{IEEEtran}
\IEEEoverridecommandlockouts
\usepackage{cite}
\usepackage{amsmath,amssymb,amsfonts}
\usepackage{algorithmic}
\usepackage{algorithm}
\usepackage{graphicx}
\usepackage{textcomp}
\usepackage{xcolor}
\usepackage{booktabs}
\usepackage{placeins}
\usepackage[hidelinks]{hyperref}
\hypersetup{
    pdftitle={A Hybrid Quantum-Classical Coordination Architecture for Portfolio Optimization via Global Context Injection},
    pdfauthor={Xiaoguang Yang and Menghan Dou and Guoping Guo},
    pdfkeywords={Quantum Finance, Portfolio Optimization, Hybrid Quantum-Classical Optimization, Problem Decomposition, QAOA}
}

\newtheorem{proposition}{Proposition}

\def\BibTeX{{\rm B\kern-.05em{\sc i\kern-.025em b}\kern-.08em
    T\kern-.1667em\lower.7ex\hbox{E}\kern-.125emX}}

\begin{document}

\title{A Hybrid Quantum-Classical Coordination Architecture for Portfolio Optimization via Global Context Injection\thanks{*Corresponding authors.}
}

\author{
\IEEEauthorblockN{Xiaoguang Yang\textsuperscript{*}, Menghan Dou\textsuperscript{*}, and Guoping Guo\textsuperscript{*}}
\IEEEauthorblockA{\textit{Origin Quantum Computing Technology (Hefei) Co., Ltd.} \\
Hefei, Anhui, China \\
\{yxg2, dmh, ggp\}@originqc.com}
}

\maketitle

\begin{abstract}
Near-term quantum and quantum-inspired solvers for portfolio optimization rely on local subproblem execution under severe size constraints, but this locality can omit cross-cluster covariance essential for global risk coordination. We propose Context-Aware Folding (CAF), a lightweight hybrid quantum-classical coordination layer between one-shot static decomposition and full-matrix optimization. CAF injects a compressed global risk state into each local subproblem via a state-dependent linear bias, decouples local candidate generation from global commitment, and retains a sequential acceptance rule with a monotonic non-divergence guarantee. On a 2016 Russell 3000 subset (N=484) with Simulated Annealing (SA), CAF improves the scalarized mean-variance objective by 6.59\% over a static baseline (20/20 wins), and by 0.2579\% on an additional 2018 panel (N=1397, 17/20 wins). We further report matched folded N=40 compatibility studies with the Quantum Approximate Optimization Algorithm (QAOA) and simulated quantum annealing (SQA) as local solvers, together with a frozen hardware-in-the-loop run on Origin Quantum's Wukong 180 (\texttt{WK\_C180}) for one warm-started QAOA subproblem at n=5. These results support CAF as a coordination architecture with strong classical large-scale evidence, cross-backend compatibility, and local executability on real quantum hardware in the noisy intermediate-scale quantum (NISQ) era.
\end{abstract}

\begin{IEEEkeywords}
Quantum Finance, Portfolio Optimization, Hybrid Quantum-Classical Optimization, Problem Decomposition, QAOA
\end{IEEEkeywords}

\section{Introduction}

The Mean-Variance Portfolio Optimization (MVO) problem is a cornerstone of modern quantitative finance \cite{markowitz1952portfolio}. As one of the canonical use cases in quantum finance \cite{orus2019quantum}, MVO becomes NP-hard when cardinality constraints (i.e., limiting the total number of invested assets) are introduced. This makes it a prime candidate for stochastic and quantum optimization heuristics such as Simulated Annealing (SA), Quantum Annealing (QA), and the Quantum Approximate Optimization Algorithm (QAOA) in explicitly financial settings \cite{kirkpatrick1983optimization,farhi2014qaoa,venturelli2019reverse,kerenidis2019quantum}. Recent progress in QAOA has substantially improved the viability of local quantum optimization through warm-start strategies, parameter-transfer or schedule-transfer methods, and large-scale distributed or hardware-aware studies \cite{egger2021warm,cepaite2025quantumenhanced,nzongani2026scaling,xu2025distributed,he2026regularized}. Yet in the dense portfolio settings considered here and under current hardware and simulation constraints, these advances still leave practitioners with restricted local subproblems embedded within a larger hybrid workflow.

However, the limited qubit count and restricted connectivity of current noisy intermediate-scale quantum (NISQ) hardware \cite{preskill2018nisq} pose a severe bottleneck. A prominent industrial approach is to decompose the massive fully-connected asset graph into smaller, disjoint communities after Random Matrix Theory (RMT) preprocessing and graph-based clustering \cite{newman2006modularity,laloux1999noise,jpmorgan2024pipeline,jpmorgan_dcmppln}. While this enables execution on near-term devices, the resulting \textit{separability assumption}---treating each community as independent during optimization---omits cross-cluster covariance terms that remain important for global risk diversification.

To address this, we propose \textbf{CAF (Context-Aware Folding)}. Rather than treating decomposition as a one-shot process, CAF treats the clustered output as an initialization state and focuses on the missing coordination layer after partitioning. It iteratively re-optimizes each community by injecting the current decisions of all other communities as a linear bias (the ``context''), as illustrated in Fig.~\ref{fig:concept}. In this sense, CAF is not a new partitioning scheme but a lightweight coordination layer for reintroducing global risk information into hardware-compatible local solvers.

\begin{figure}[tp]
    \centering
    \includegraphics[width=0.98\columnwidth]{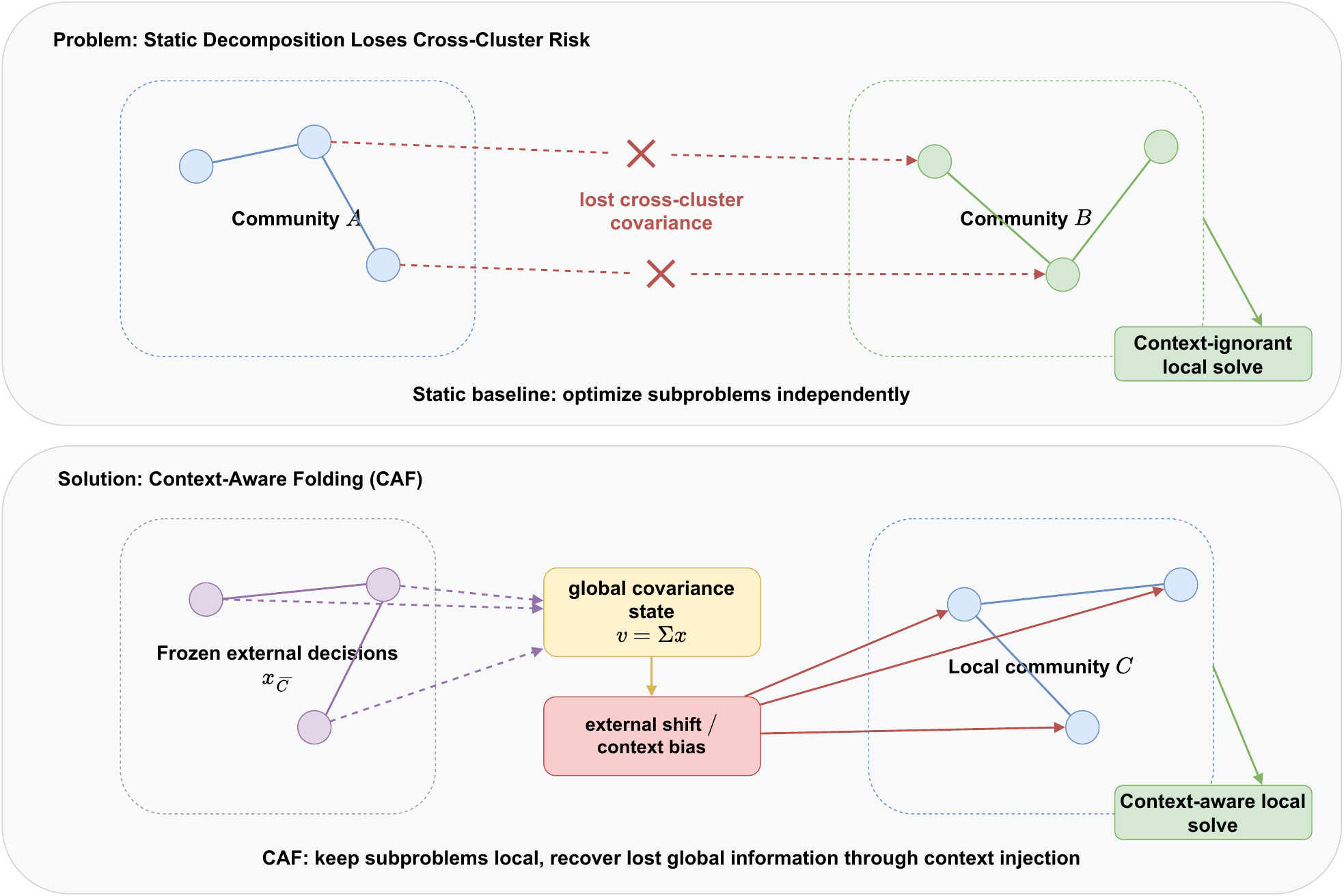}
    \caption{Static decomposition omits cross-cluster covariance terms during local optimization, whereas CAF reintroduces them through a compressed linear context term.}
    \label{fig:concept}
\end{figure}

This paper presents the mathematical formulation of CAF and evaluates it from two complementary perspectives: large-scale architectural validation with classical backends and small-scale compatibility / executability studies with quantum simulators and real quantum hardware. To support reproducibility and facilitate further research, the core CAF architecture engine and the complete benchmark reproduction scripts are open-sourced at \url{https://github.com/kedayxg/CAF}. The main contributions of this work are as follows:
\begin{itemize}
    \item \textbf{Global Context Injection Architecture:} We propose CAF, a hybrid quantum-classical coordination architecture that restores part of the global risk information lost after static partitioning by injecting a state-dependent linear bias into each local portfolio subproblem.
    \item \textbf{Efficient Coordinated Update Rule:} We derive an $O(|C|^2)$ local context extraction procedure and an exact local-form objective increment for sequential commit decisions, yielding a monotonic non-divergence guarantee under the CAF acceptance rule. CAF's decoupled proposal phase is also architecturally compatible with parallel batch execution across independent subproblems.
    \item \textbf{Multi-Scale, Multi-Backend Validation:} On a dense financial benchmark ($N=484$), CAF improves the scalarized objective by $6.59\%$ over the static baseline across 20 SA runs ($20/20$ wins), and the gain persists on a second 2018 panel ($N=1397$) and under a deterministic Gurobi local solver. A frozen-state ablation isolates context injection as the driver, while matched plain-QAOA/SQA studies and a warm-started local update executed on Origin Wukong 180 show that CAF's folded subproblems remain operable on quantum simulators and real quantum hardware without modifying the outer-loop acceptance rule.
\end{itemize}

\subsection{Related Work}

\textbf{Quantum Portfolio Optimization.} Portfolio optimization is a canonical NP-hard problem when cardinality or integer constraints are imposed. Several studies have explored mapping such problems to quantum annealers. For instance, Rosenberg et al. \cite{rosenberg2016solving} and Venturelli et al. \cite{venturelli2019reverse} provided early annealing-based studies of portfolio optimization on D-Wave hardware, including multi-period trading and mean-variance settings. However, these formulations often induce dense logical interaction graphs, which in turn lead to substantial minor-embedding overhead on sparse physical hardware topologies \cite{choi2008embedding} (where multiple physical qubits must be chained to represent a single logical qubit). On the gate-based side, theoretical quantum algorithms have been proposed for constrained portfolio optimization \cite{kerenidis2019quantum}. Beyond the original QAOA proposal \cite{farhi2014qaoa}, constrained extensions such as the Quantum Alternating Operator Ansatz \cite{hadfield2019qaoa} and warm-start strategies \cite{egger2021warm} have been developed for combinatorial optimization, and more recent work has pushed QAOA further through warm-started hybrid workflows, schedule transfer, and distributed large-scale execution \cite{cepaite2025quantumenhanced,nzongani2026scaling,xu2025distributed,he2026regularized}. While empirical portfolio-specific QAOA studies have appeared \cite{brandhofer2022benchmarking}, more recent work has extended quantum portfolio formulations to higher-order moments \cite{uotila2025higher} and to constraint-aware encodings that avoid penalty-induced landscape distortion \cite{thomassin2025constrained,hao2026constraint}; nonetheless, these studies remain focused on either relatively small financial instances, graph-structured benchmarks, or simulation-backed scaling, and an extensive recent benchmark reports only very limited room for a potential quantum advantage over state-of-the-art classical solvers on this problem \cite{stopfer2025benchmark}. This is exactly where our motivation differs: even when local QAOA becomes more usable, dense Markowitz quadratic unconstrained binary optimization (QUBO) problems remain globally coupled through covariance interactions and cardinality consistency, so locally strong candidates still need a coordination layer before they can be committed safely at the portfolio level. Ultimately, current gate errors and limited coherence times in the NISQ regime \cite{preskill2018nisq} still restrict the practical execution of these local gate-based solvers to very small asset universes.

\textbf{Graph Decomposition in NISQ.} To circumvent hardware limits, hybrid algorithms heavily rely on problem decomposition. While classical optimization often employs continuous relaxation techniques such as Benders or Dantzig-Wolfe decomposition \cite{benders1962partitioning,dantzig1960decomposition}, these methods map poorly to discrete quantum hardware. Consequently, in the quantum domain, graph-based clustering has become a common approach \cite{booth2017partitioning}. A recent industrial example is JPMorgan's decomposition pipeline \cite{jpmorgan2024pipeline,jpmorgan_dcmppln}, which applies RMT preprocessing and modified spectral clustering based on Newman's method to partition the asset correlation graph. Each cluster is then solved independently on a quantum or classical heuristic backend; very recent work has executed such RMT-and-community-detection decomposition pipelines end-to-end on trapped-ion quantum processors, where each cluster defines a hardware-embeddable QUBO subproblem \cite{gomez2026largescale}. Although some pipelines incorporate static risk-rebalancing or lightweight heuristic post-processing to mitigate boundary effects \cite{jpmorgan_dcmppln}, the hard partitioning still omits cross-cluster covariance terms during the core optimization phase, so the local solvers no longer account for the full global risk interaction structure.

\textbf{Iterative Refinement and Coordinated Decomposition.} Classical coordination frameworks, such as Block Coordinate Descent (BCD) \cite{tseng2001bcd} and Lagrangian-decomposition methods \cite{swoboda2017dual,martins2015ad3}, provide mechanisms for coordinating coupled subproblems. However, adapting these classical dual-ascent methods directly to our QUBO pipeline is problematic, as they typically require continuous variable relaxations or generic consensus variables that are difficult to map directly onto current hardware-compatible binary formulations. While Qbsolv-style hybrid sub-QUBO heuristics \cite{booth2017partitioning} are well known, they operate on raw QUBO partitions and focus on heuristic extraction/merge strategies rather than domain-specific global coordination. To bridge this gap, our Context-Aware Folding (CAF) operates explicitly at the financial domain level. Instead of passing generic continuous dual variables, it computes a state-dependent ``risk shift'' vector from the current global portfolio via a local $O(|C|^2)$ computation and injects it directly into the local linear term of the QUBO. This keeps CAF compatible with existing local quantum solvers at the formulation level while preserving a direct interpretation in terms of global portfolio risk.

\section{The CAF Architecture}

The design of the CAF architecture is inspired by classical BCD and Large Neighborhood Search (LNS) \cite{shaw1998lns}. A naive adaptation of these iterative methods to portfolio optimization would require repeatedly evaluating the full $O(N^2)$ global objective to assess the impact of every single asset flip. Furthermore, standard BCD is inherently sequential; updating one variable block strictly depends on the latest state of the others. In the context of quantum computing, a purely sequential outer loop can be highly inefficient for hardware utilization, as it prevents the batch submission of multiple independent subproblems to the Quantum Processing Unit (QPU) or distributed classical solvers, forcing the system to incur heavy latency and queueing overheads. 

To address these critical bottlenecks, CAF improves upon the classical coordinate descent paradigm. By explicitly decoupling the proposal phase from the commit phase, CAF exposes a parallelizable proposal stage within each iteration where multiple disjoint communities can generate candidate updates simultaneously. This architecture is designed to support batch submission to independent backends; in an actual concurrent deployment, the solver-side proposal latency would be governed by the slowest subproblem in each iteration rather than by the sum over all subproblems. The present experiments do not benchmark such concurrent execution directly and instead focus on objective quality and architectural behavior. Meanwhile, the incremental cache ensures that the classical coordinator does not become a bottleneck by replacing repeated global $O(N^2)$ evaluations with local $O(|C|^2)$ shift extraction. The complete architecture and its implementation details are described below.

\begin{figure*}[tp]
    \centering
    \includegraphics[width=0.98\textwidth]{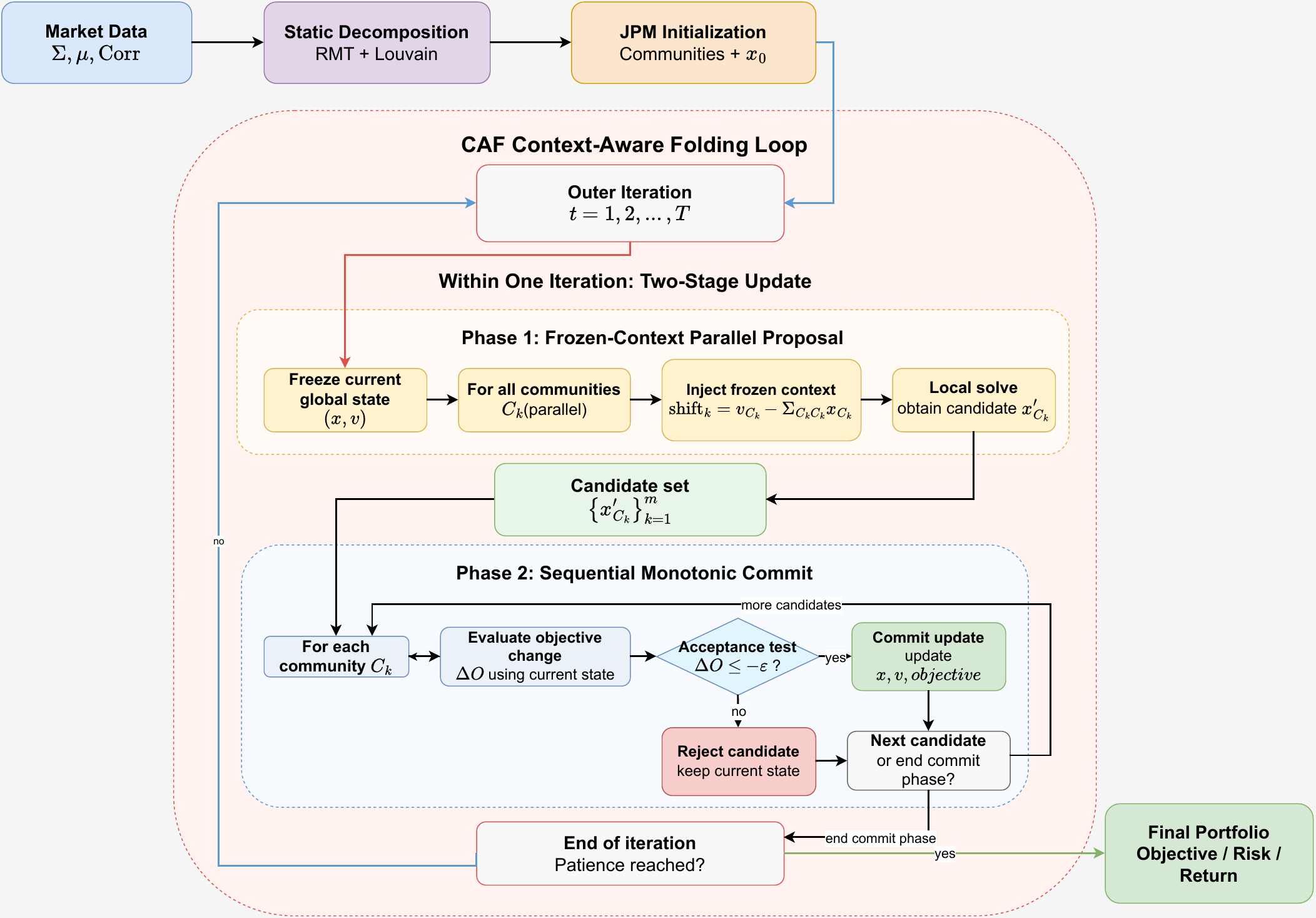}
    \caption{CAF workflow: static decomposition initializes the portfolio, parallel proposals are generated under a frozen context, and updates are committed sequentially by a monotonic test.}
    \label{fig:overall_flow}
\end{figure*}

This diagram (Fig.~\ref{fig:overall_flow}) summarizes the full system pipeline. Raw market inputs are first processed by an upstream static decomposition stage to produce communities and an initial portfolio, after which CAF enters an outer iteration loop. Within each iteration, CAF first generates community-wise proposals under a frozen global context, then applies a sequential monotonic acceptance step against the current global state, and only performs the patience-based stopping test after the commit phase is completed.

\subsection{The Global Objective}
The global portfolio objective under cardinality constraint $K$ is defined as:
\begin{equation}
\label{eq:global_objective}
\min_{x \in \{0,1\}^N} \mathcal{O}(x) = q \cdot x^T \Sigma x - \mu^T x \quad \text{s.t.} \sum_i x_i = K
\end{equation}
where $\Sigma$ is the covariance matrix, $\mu$ is the expected return vector, and $q$ is the risk aversion factor.
When this constrained objective is passed to a hardware-compatible local solver, the community-level budget constraint is absorbed into the local QUBO through a standard quadratic penalty term $\lambda \left(\sum_{i \in C} x_i - K_C\right)^2$ with sufficiently large $\lambda$. CAF therefore modifies only the effective linear bias term seen by each local solver, while leaving the underlying constrained subproblem structure unchanged.

\subsection{Context Injection (The External Shift)}
Given a community $C$ and the set of all other assets $\bar{C}$, the global risk term $x^T \Sigma x$ can be decomposed. When optimizing $x_C$ while holding $x_{\bar{C}}$ fixed, the cross-term $2 x_C^T \Sigma_{C\bar{C}} x_{\bar{C}}$ acts as a linear penalty imposed by the outside world.

To avoid repeatedly computing this cross-term via explicit dependence on the full complement set $\bar{C}$ during each iteration, CAF maintains a global covariance state vector, defined as:
\begin{equation}
v = \Sigma x \in \mathbb{R}^N
\end{equation}
Here, the $i$-th element of $v$ represents the total covariance contribution from the currently selected global portfolio to asset $i$. When focusing on a specific community $C$, we extract the corresponding sub-vector $v_C \in \mathbb{R}^{|C|}$. Mathematically, $v_C$ aggregates the covariance from all assets, such that $v_C = \Sigma_{CC} x_C + \Sigma_{C\bar{C}} x_{\bar{C}}$. 

By exploiting this property, we can isolate the external interference simply by subtracting the internal community covariance.

\begin{proposition}[$O(|C|^2)$ Local Shift Extraction]
Given the global covariance state vector $v_C$, the external shift for community $C$ can be computed in $O(|C|^2)$ time as:
\begin{equation}
\text{Shift}_C = v_C - \Sigma_{CC} x_C = \Sigma_{C\bar{C}} x_{\bar{C}}
\end{equation}
\end{proposition}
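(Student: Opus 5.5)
The plan has two parts: an algebraic identity and a complexity count. For the identity, I would permute coordinates so that $x = (x_C, x_{\bar{C}})$ and partition $\Sigma$ into blocks $\Sigma_{CC}$, $\Sigma_{C\bar{C}}$, $\Sigma_{\bar{C}C}$, $\Sigma_{\bar{C}\bar{C}}$. The permutation is harmless because it only relabels assets. Reading off the rows indexed by $C$ in $v = \Sigma x$ then gives
\begin{equation*}
v_C = \Sigma_{CC} x_C + \Sigma_{C\bar{C}} x_{\bar{C}},
\end{equation*}
which is exactly the decomposition already noted before the proposition. Subtracting $\Sigma_{CC} x_C$ from both sides yields $\text{Shift}_C = v_C - \Sigma_{CC} x_C = \Sigma_{C\bar{C}} x_{\bar{C}}$. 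This is also the coefficient vector of the cross-term $2x_C^T \Sigma_{C\bar{C}} x_{\bar{C}}$, up to the factor $2$ from symmetry of $\Sigma$ (so $\Sigma_{\bar{C}C} = \Sigma_{C\bar{C}}^T$). That justifies calling it the external linear bias.

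For the complexity claim, I would state the computational model explicitly. The cache $v = \Sigma x$ is maintained and accessible, so extracting $v_C$ costs $O(|C|)$ by index lookup. The block $\Sigma_{CC}$ can be read in $O(|C|^2)$, since it can be precomputed once per community or indexed directly. The product $\Sigma_{CC} x_C$ is a dense $|C|\times|C|$ matrix-vector multiply, costing $O(|C|^2)$ arithmetic operations. The final subtraction costs $O(|C|)$. Summing these gives $O(|C|^2)$, with no dependence on $N$ or on $|\bar{C}|$. In contrast, computing $\Sigma_{C\bar{C}} x_{\bar{C}}$ directly would cost $O(|C|(N-|C|))$.

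The main subtlety is not the algebra but the precondition that $v$ is current. I would therefore add a remark that the $O(|C|^2)$ bound is amortized against cache maintenance. Concretely, when a committed update changes $x$ on a set $D$ of flipped assets (e.g.\ $D\subseteq C$), $v$ is updated via $v \leftarrow v + \Sigma_{:,D}\,\Delta x_D$, at cost $O(N|D|)$ per commit. This keeps the invariant $v = \Sigma x$ for subsequent extractions. This cost is separate from the shift-extraction cost the proposition asserts, and the statement takes $v_C$ as given.
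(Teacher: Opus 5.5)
Your proposal is correct and follows essentially the same route as the paper's proof sketch: read off the $C$-indexed rows of $v=\Sigma x$ as $\Sigma_{CC}x_C+\Sigma_{C\bar{C}}x_{\bar{C}}$, then subtract the internal product $\Sigma_{CC}x_C$ at $O(|C|^2)$ cost, never touching $\bar{C}$. Your extra remark is accurate and goes slightly beyond the paper's sketch: keeping $v$ current costs $O(N|D|)$ per commit, which is separate from the extraction cost and which the paper leaves implicit.
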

\begin{IEEEproof}[Proof Sketch]
By definition, $v_C$ is the $C$-indexed sub-vector of $\Sigma x$. Expanding the matrix-vector product yields $v_C = \Sigma_{CC} x_C + \Sigma_{C\bar{C}} x_{\bar{C}}$. Subtracting the internal covariance term $\Sigma_{CC} x_C$, which takes $O(|C|^2)$ operations, directly isolates the external interference $\Sigma_{C\bar{C}} x_{\bar{C}}$ without needing to iterate over the potentially massive set $\bar{C}$.
\end{IEEEproof}

Intuitively, $\text{Shift}_C$ is the linear risk penalty induced by assets outside community $C$ under the current global portfolio state. Because $x_{\bar{C}}$ is frozen during the local optimization of community $C$, the purely external risk term $q x_{\bar{C}}^T \Sigma_{\bar{C}\bar{C}} x_{\bar{C}}$ becomes a constant and can be ignored. Consequently, the global cross-term $2q x_C^T \Sigma_{C\bar{C}} x_{\bar{C}}$ can be absorbed into the linear portion of the local subproblem. Using the extracted context, the adjusted linear returns for the local sub-QUBO become:
\begin{equation}
\label{eq:adjusted_returns}
\tilde{\mu}_C = \mu_C - 2q \cdot \text{Shift}_C
\end{equation}

\subsection{Monotonic Acceptance and Dynamic Annealing}
Once a local quantum or simulated annealer proposes a new state $x'_C$ satisfying the local budget constraint, CAF must evaluate the exact global objective change $\Delta \mathcal{O} = \mathcal{O}(x') - \mathcal{O}(x)$ to decide whether to accept the update. Let the local state change be $\Delta x_C = x'_C - x_C$. Because all external variables $x_{\bar{C}}$ remain unchanged during this local update, the global quadratic expansion simplifies. By substituting $x' = x + \Delta x$ into the global objective and factoring out the cross-terms using the previously defined state vector $v_C = \Sigma_{CC} x_C + \Sigma_{C\bar{C}} x_{\bar{C}}$, the exact global increment reduces to an efficient local computation:
\begin{equation}
\Delta \mathcal{O} = q \left(\Delta x_C^T \Sigma_{CC} \Delta x_C + 2 v_C^T \Delta x_C\right) - \mu_C^T \Delta x_C
\end{equation}
The update is accepted if and only if $\Delta \mathcal{O} \le -\epsilon$, where $\epsilon=10^{-5}$ is a numerical tolerance. 

\begin{proposition}[Monotonic Non-Divergence]
Under the acceptance criterion $\Delta \mathcal{O} \le -\epsilon$, every committed CAF update decreases the global objective by at least the prescribed numerical tolerance. Since the search space $\{0,1\}^N$ is finite, CAF cannot diverge to arbitrarily worse states; instead, it generates a finite descending sequence of accepted objective values until no further tolerance-significant improvement is accepted.
\end{proposition}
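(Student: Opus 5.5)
The plan is to first verify that the local increment formula for $\Delta\mathcal{O}$ is exact. Once that is done, each acceptance test is a test on the true global objective, and the rest is a finite-descent argument.

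\textbf{Exactness of the increment.} Write $x' = x + \Delta x$, where $\Delta x$ is supported on $C$ only (so $\Delta x_{\bar{C}} = 0$). Because $\Sigma$ is symmetric,
\begin{equation*}
\mathcal{O}(x') - \mathcal{O}(x) = q\left(\Delta x^T \Sigma \Delta x + 2 (\Sigma x)^T \Delta x\right) - \mu^T \Delta x .
\end{equation*}
Restricting to the support of $\Delta x$ gives $\Delta x^T\Sigma\Delta x = \Delta x_C^T\Sigma_{CC}\Delta x_C$, $(\Sigma x)^T\Delta x = v_C^T\Delta x_C$, and $\mu^T\Delta x = \mu_C^T\Delta x_C$. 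This reproduces the displayed formula exactly.

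Two facts make this exactness usable. First, $v = \Sigma x$ is refreshed after every commit via $v \leftarrow v + \Sigma_{:,C}\Delta x_C$. Second, candidates are generated under a frozen context but evaluated against the \emph{current} global state. Together these mean the computed $\Delta\mathcal{O}$ equals the true change $\mathcal{O}(x_{\text{new}}) - \mathcal{O}(x_{\text{current}})$ at the moment of commit. This settles the first claim: every committed update satisfies $\mathcal{O}(x^{(t+1)}) \le \mathcal{O}(x^{(t)}) - \epsilon$.

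\textbf{Finite descent.} Index the accepted states as $x^{(0)}, x^{(1)}, \dots$. Telescoping gives $\mathcal{O}(x^{(t)}) \le \mathcal{O}(x^{(0)}) - t\epsilon$. Since $\{0,1\}^N$ is finite, $\mathcal{O}$ attains a minimum $\mathcal{O}^\ast$ over it, so every accepted state satisfies $\mathcal{O}(x^{(t)}) \ge \mathcal{O}^\ast$. Combining the two bounds, the number of accepted commits is at most $\lfloor(\mathcal{O}(x^{(0)}) - \mathcal{O}^\ast)/\epsilon\rfloor$. Strict decrease also rules out revisiting any state, which independently caps the count at $2^N$. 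The accepted objective values therefore form a finite, strictly descending sequence, each value at most $\mathcal{O}(x^{(0)})$. Hence no state worse than the initial static-decomposition portfolio is ever committed, which is the non-divergence claim. After the last acceptance, only rejections can occur, and the patience-based stopping rule ends the outer loop.

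\textbf{Main obstacle.} The delicate point is the interaction between parallel proposals and sequential commits. A proposal for community $C$ is built from a stale context. If the increment were evaluated with that stale $v_C$, the guarantee could fail, because an earlier commit in the same iteration changes $\Sigma_{C\bar{C}} x_{\bar{C}}$. I will therefore state explicitly that the acceptance test recomputes $v_C$ from the live cache, i.e.\ from $x$ after all prior commits. I will also note two further requirements: the proposal must satisfy the global budget, which holds because the local budget $K_C$ keeps $\sum_i x_i = K$ invariant; and each swap must replace $x_C$ by $x'_C$ in the current state. With these conditions in place, the exactness step applies at every commit, and the descent argument goes through unchanged.
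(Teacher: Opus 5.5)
Your proposal is correct and follows essentially the paper's approach: every commit lowers $\mathcal{O}$ by at least $\epsilon$, and the finite space $\{0,1\}^N$ cannot support an infinite $\epsilon$-descending sequence. Your extra steps make explicit what the paper takes from its earlier increment derivation and its sequential-commit remark; these are the re-verification that the local increment, evaluated against the live cache $v$, equals the true global change, and the explicit commit bound $\lfloor(\mathcal{O}(x^{(0)})-\mathcal{O}^\ast)/\epsilon\rfloor$ (the global-budget condition you list is not needed for the descent argument, since your $\mathcal{O}^\ast$ is taken over all of $\{0,1\}^N$).
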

\begin{IEEEproof}[Proof Sketch]
The state $x \in \{0,1\}^N$ resides in a finite discrete space, and $\mathcal{O}(x)$ is deterministically evaluated. A candidate is generated under a frozen global context and committed only if $\Delta \mathcal{O} = \mathcal{O}(x') - \mathcal{O}(x) \le -\epsilon$, so every committed update lowers the evaluated global objective by at least $\epsilon$ and the accepted objective sequence is strictly descending up to the tolerance. A finite space cannot admit an infinite sequence of distinct $\epsilon$-decreases, so CAF reaches a state from which no further tolerance-significant update is committed. As the rule is greedy, CAF is a coordinated local-refinement procedure rather than a globally convergent exact method.
\end{IEEEproof}
By construction, CAF is a monotone block-restricted local-refinement procedure: within each outer iteration, it forms community-wise proposals under a frozen context and commits a proposal only if it strictly decreases the current global objective. Under exact local subproblem solves, this yields a block-coordinate descent scheme whose terminal state is locally optimal with respect to single-community exact updates; with heuristic local solvers such as SA, QAOA, or simulated quantum annealing (SQA), the procedure instead terminates at a block-restricted fixed point under the chosen local solver. The magnitude of CAF's improvement is therefore governed by how much cross-cluster structure the upstream static decomposition leaves uncoordinated (the available headroom), rather than by any guarantee of global optimality.

For the annealing-based local solves in the main SA benchmark, CAF also uses \textit{Dynamic Reads}, a simple three-level size-adaptive schedule that assigns 20, 100, or 200 reads according to community size, with a per-local-subproblem-solve cap matching the 200-read setting used in the main SA baseline. The schedule does not increase the maximum local solve budget beyond the baseline setting; it only allocates fewer reads to smaller communities whose local search spaces are correspondingly easier. We do not apply this schedule to gate-based QAOA, because QAOA is not parameterized by annealing reads: its computational budget is governed instead by circuit depth, shot count, and the classical parameter-optimization loop.

\subsection{Algorithm Implementation}
Algorithm \ref{alg:caf} lists the two-stage CAF loop: parallel proposal generation under a frozen context, followed by sequential commit decisions under the current global state.

\begin{algorithm}[tp]
\caption{Context-Aware Folding (CAF)}
\label{alg:caf}
\begin{algorithmic}[1]
\REQUIRE $\Sigma,\ \mu,\ q,\ \{C_k\},\ \{K_k\},\ x^{(0)},\ T,\ P,\ \epsilon$
\STATE $x \leftarrow x^{(0)}$
\STATE $v \leftarrow \Sigma x$
\STATE $\text{best} \leftarrow \mathcal{O}(x)$
\STATE $n_{\mathrm{stall}} \leftarrow 0$
\FOR{$t = 1$ \TO $T$}
    \STATE Freeze current global state $(x,v)$
    \FOR{\textbf{each community} $C_k$ \textbf{in parallel}}
        \STATE Inject frozen context: $\text{Shift}_{C_k} \leftarrow v_{C_k} - \Sigma_{C_k C_k}x_{C_k}$
        \STATE Solve local subproblem on $C_k$ under budget $K_k$ to obtain candidate $x'_{C_k}$
    \ENDFOR
    \FOR{\textbf{each community} $C_k$}
        \STATE $\Delta x_{C_k} \leftarrow x'_{C_k} - x_{C_k}$
        \STATE $\Delta \mathcal{O} \leftarrow q\left(\Delta x_{C_k}^{T}\Sigma_{C_k C_k}\Delta x_{C_k} + 2v_{C_k}^{T}\Delta x_{C_k}\right) - \mu_{C_k}^{T}\Delta x_{C_k}$
        \IF{$\Delta \mathcal{O} \le -\epsilon$}
            \STATE $x_{C_k} \leftarrow x'_{C_k}$
            \STATE $v \leftarrow v + \Sigma_{:,C_k}\Delta x_{C_k}$
        \ENDIF
    \ENDFOR
    \IF{$\mathcal{O}(x) < \text{best}$}
        \STATE $\text{best} \leftarrow \mathcal{O}(x)$
        \STATE $n_{\mathrm{stall}} \leftarrow 0$
    \ELSE
        \STATE $n_{\mathrm{stall}} \leftarrow n_{\mathrm{stall}} + 1$
    \ENDIF
    \IF{$n_{\mathrm{stall}} \ge P$}
        \STATE \textbf{stop}
    \ENDIF
\ENDFOR
\RETURN $x$
\end{algorithmic}
\end{algorithm}

This algorithm matches the implementation used in our experiments. The proposal phase is naturally parallelizable, but the commit phase must remain sequential: if interacting communities were updated simultaneously, their combined cross-cluster penalties could increase the global objective. Re-evaluating $\Delta \mathcal{O}$ against the updated global state $v$ after each accepted move preserves monotonicity. From the coordinator perspective, shift extraction and acceptance testing for one community cost $O(|C_k|^2)$, so one full commit pass costs $O\!\left(\sum_k |C_k|^2\right)$ rather than repeatedly recomputing the global quadratic objective at $O(N^2)$ per local update. If proposal generation is run concurrently, its wall-clock latency is governed by the slowest subproblem. The procedure stops when convergence or the patience criterion is reached.

\section{Experimental Results}

We evaluate CAF at two complementary scopes: a large-scale classical benchmark of its coordination gains, and a small-scale quantum compatibility study on folded subproblems. For the large-scale benchmark, we use a dense 2016 Russell 3000 subset ($N=484$ assets), plus an additional 2018 panel ($N=1397$) to test robustness. In the main benchmarks, the portfolio cardinality constraint is set to $K=N/2$, matching the upstream pipeline's default cardinality setting (e.g., $K=242$ for the 2016 instance, $K=698$ for the 2018 panel), representing a highly constrained and NP-hard optimization scenario.

In the following subsections, we first detail the experimental setup and software backends, then present the core benchmark performance and evaluate the architecture's scalability alongside an architectural ablation. We next assess whether the observed gains persist under deterministic local solves via exact classical cross-validation, before analyzing how CAF reshapes the local optimization landscape and examining folded subproblems on quantum simulators and real quantum hardware, including both matched-backend compatibility results and frozen local hardware validation.

\subsection{Experimental Setup}
We conduct the main large-scale experiments on the preprocessed Russell 3000 benchmark instance released in JPMorgan's open-source decomposition repository \cite{jpmorgan_dcmppln}, which accompanies its decomposition pipeline studies \cite{jpmorgan2024pipeline}. In our reproduction, we use the released instance corresponding to a cleaned cross-sectional snapshot anchored at the beginning of January 2016. The upstream release already filters out assets with insufficient liquidity or missing price data, yielding a dense universe of $N=484$ assets. We take the released annualized expected return vector as $\mu$ and the released covariance matrix as $\Sigma$. Before graph construction, we apply RMT filtering to the empirical correlation matrix, and we then apply the upstream spectral community-detection pipeline on the filtered graph to partition the universe into 9 subproblems. To test robustness beyond a single historical panel, we additionally generated a second benchmark from a separate 2018 Russell 3000 return panel using the same covariance / correlation / returns construction and the same CAF-vs-static evaluation pipeline; after preprocessing, this second panel contains $N=1397$ assets. For reproducibility, the seeded numerical results reported here---both the classical benchmarks and the quantum-simulator compatibility studies---can be reproduced under the pinned environment shipped with the code release. These software-based experiments do not require specialized hardware and can be run on a standard laptop; the reported numbers are controlled by the pinned environment and seeds, while wall-clock time may vary across machines.

For the main benchmark, Simulated Annealing (SA) serves as the primary local backend because it lets CAF and the static baseline share the same scalable solver family at full scale. We then use Gurobi as a deterministic exact cross-check. The quantum study later in this section is more limited in scope and consists of matched-backend compatibility experiments on folded $N=40$ instances plus a separate frozen-state hardware executability study on selected warm-started subproblems. In the hardware part, we freeze a CAF state, build the injected local subproblem, optimize warm-started QAOA parameters once on an ideal simulator, and then execute the same ansatz on Origin Wukong 180 (backend \texttt{WK\_C180}); hardware candidates are decoded under the same feasible-sample rule as in the corresponding simulator protocol. Because hardware calibrations and noise fluctuate over time, the real-hardware raw counts are protocol-reproducible but are not expected to be bitwise identical across reruns. Exact cross-validation used Gurobi Optimizer (v10.0), the gate-based study used custom Qiskit- and PyQPanda-based QAOA workflows, and the annealing-style simulated study used D-Wave's Ocean SDK (\texttt{dwave-samplers}).

The primary baseline is the official static decomposition pipeline from JPMorgan's repository \cite{jpmorgan_dcmppln}, denoted herein as the static baseline. We omit a Full-Matrix Simulated Annealing comparison from the main results, as native execution on the fully-connected $N=484$ graph degrades in solution quality and fails to scale efficiently, which is precisely the motivation for decomposition. For a rigorous evaluation, both the static baseline and CAF utilized exactly the same underlying community structures and the same local solver family (Simulated Annealing). In the main benchmark, the reproduced static-baseline local path follows the upstream static-pipeline heuristic, including its local risk-rescaling step, and the frozen-snapshot quantum studies retain the same local convention because they build directly on that static initialization. By contrast, the scaling, sparsity, and suppression studies keep the CAF local updates and the global acceptance/evaluation aligned under a shared risk-aversion setting $q=0.5$; except in the dedicated $q$-sweep analysis, this same $q=0.5$ setting is also used for the reported CAF local updates and global evaluations discussed below. We therefore describe CAF using the $q=0.5$ notation of Eqs.~\eqref{eq:global_objective}--\eqref{eq:adjusted_returns}. In the main benchmark, the static baseline utilized a fixed 200-read setting for each local subproblem solve. In contrast, CAF employed a piecewise size-adaptive schedule that assigned 20, 100, or 200 annealing reads according to community size (specifically, 20 reads for $|C| \le 30$, 100 reads for $30 < |C| \le 80$, and 200 reads for $|C| > 80$), so that no CAF local subproblem solve exceeded the baseline's per-local-subproblem-solve read cap. This annealing-read schedule should therefore be understood as a size-adaptive redistribution of local sampling effort rather than as a larger per-subproblem budget.

\subsection{Benchmark Performance}
Table \ref{tab:benchmark} presents the core 2016 benchmark results evaluated under a risk aversion factor of $q=0.5$. The metrics reflect the mean and standard deviation over 20 independent stochastic runs.

\begin{table}[tp]
\caption{Benchmark Results on the 2016 Panel ($q=0.5$, Averaged over 20 runs)}
\label{tab:benchmark}
\begin{center}
\resizebox{\columnwidth}{!}{%
\begin{tabular}{lcccc}
\toprule
\textbf{Method} & \textbf{Objective ($\downarrow$)} & \textbf{Risk} & \textbf{Exp. Return} & \textbf{Time (s)} \\
\midrule
Static Baseline & $1.5453 \pm 0.0212$ & $3.3666 \pm 0.0475$ & $0.1380 \pm 0.0031$ & $11.39 \pm 1.39$ \\
\textbf{CAF (Ours)} & $\mathbf{1.4433 \pm 0.0148}$ & $\mathbf{3.1214 \pm 0.0328}$ & $0.1174 \pm 0.0052$ & $22.70 \pm 3.15$ \\
\bottomrule
\end{tabular}%
}
\par\smallskip\footnotesize \textit{Note:} Wall-clock times are indicative only and may vary across machines and runtime environments.
\end{center}
\end{table}

Empirically, CAF consistently achieves a lower objective value compared to the static baseline on this 2016 benchmark (an average relative improvement of $6.59\%$). Across 20 paired runs, CAF improves objective in $20/20$ cases, with an exact two-sided sign-test p-value of $1.91\times10^{-6}$ and a bootstrap $95\%$ confidence interval of $[5.99\%,\,7.22\%]$ for the mean relative improvement. This stable improvement is consistent with the view that CAF partially mitigates the cross-cluster information loss introduced by static decomposition, driving the system to a more favorable configuration. Fig.~\ref{fig:obj_conv} tracks the mean objective convergence across runs, with a shaded $\pm 1$ standard deviation band.

To test whether the gain survives beyond one historical panel, we repeated the same CAF-vs-static evaluation on the additional 2018 full-panel benchmark ($N=1397$, 20 runs, same $q=0.5$ setting). On this second panel, the static baseline attains an objective of $55.1324 \pm 0.2043$, while CAF reaches $54.9900 \pm 0.1792$, corresponding to a mean relative improvement of $0.2579\%$. CAF improves 17 of the 20 runs, with a two-sided sign-test p-value of $0.0026$ and a bootstrap $95\%$ confidence interval of $[0.1756\%,\,0.3458\%]$. The smaller gain reflects a smaller amount of available headroom rather than a failure of the mechanism: on this panel the static decomposition is already closer to a block-restricted fixed point under the current local solver, so CAF accepts fewer outer-loop updates and a minority of runs accept no move at all. The robustness takeaway is therefore that CAF's benefit is not tied to a single market snapshot, even though its magnitude varies with the regime.

\begin{figure}[tp]
    \centering
    \includegraphics[width=0.98\linewidth]{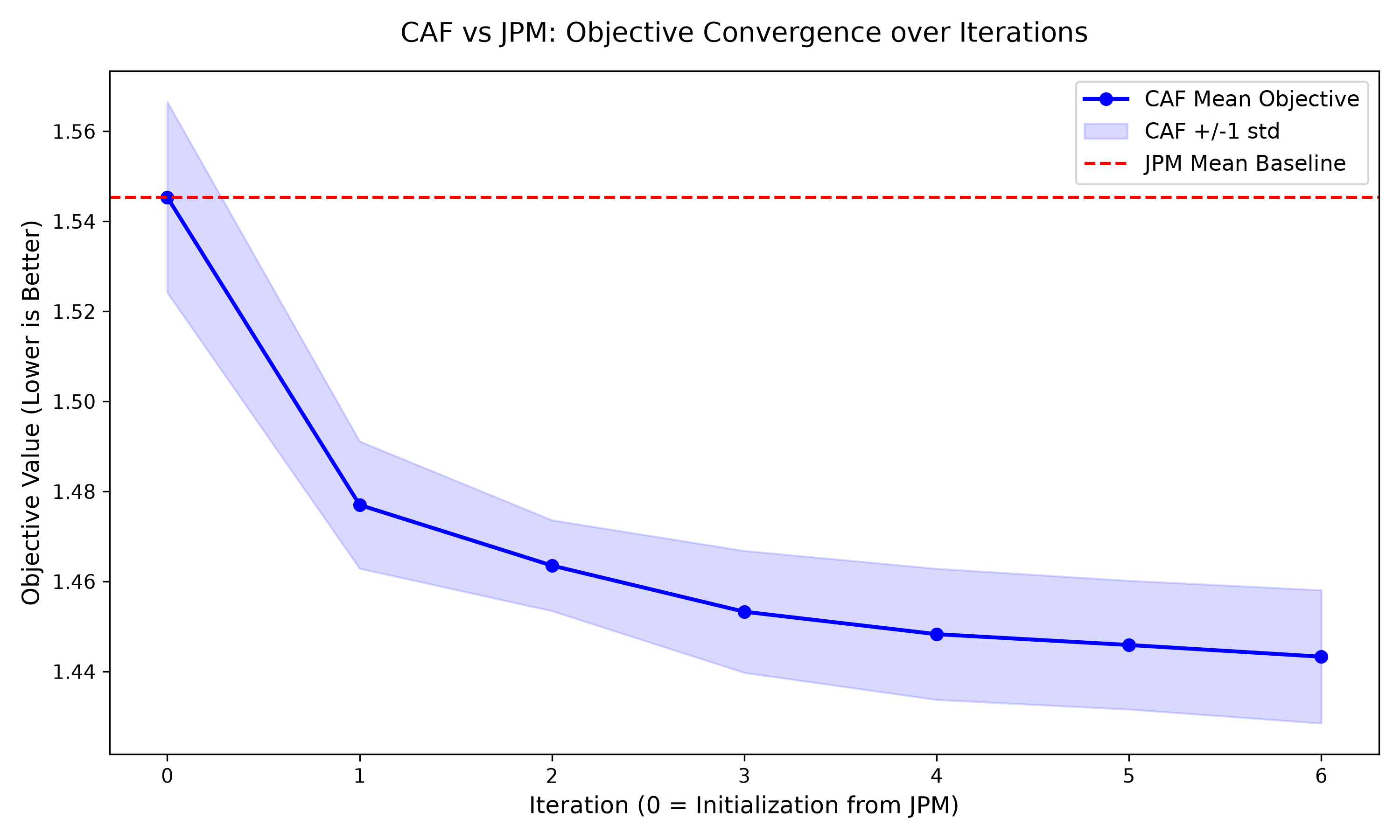}
    \caption{Objective convergence on the 2016 $N=484$ SA benchmark at $q=0.5$ over 20 runs. The shaded region shows $\pm 1$ standard deviation.}
    \label{fig:obj_conv}
\end{figure}

\begin{figure}[tp]
    \centering
    \includegraphics[width=0.98\linewidth]{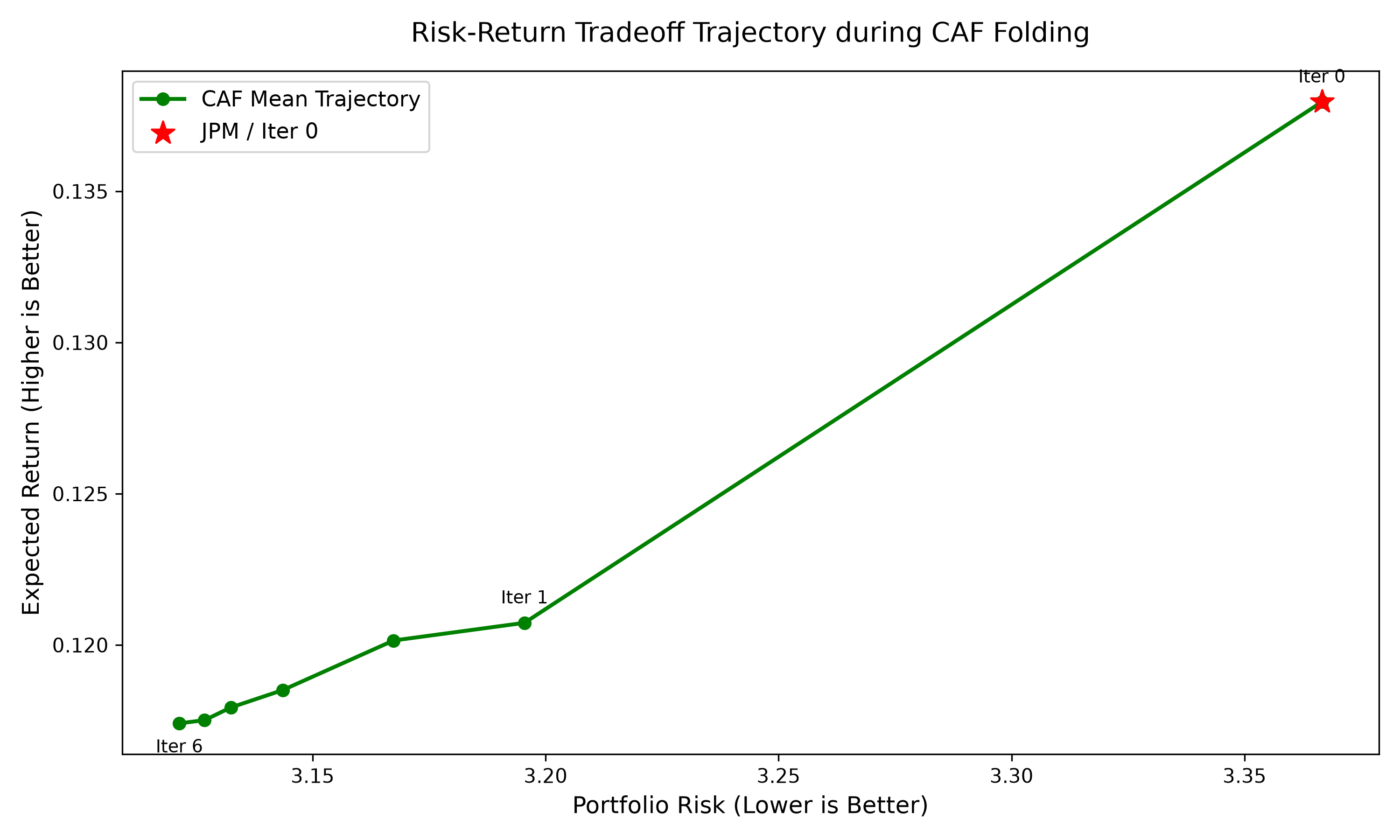}
    \caption{Mean risk-return trajectory on the 2016 $N=484$ SA benchmark at $q=0.5$. Iteration 0 is the static baseline, and later CAF iterations move toward lower risk with a moderate return reduction.}
    \label{fig:risk_ret}
\end{figure}

As shown in Fig.~\ref{fig:risk_ret}, the mean optimization trajectory indicates that CAF systematically moves the portfolio toward a lower-risk region of the risk-return plane after initialization. Although this transition is accompanied by a reduction in expected return (from $0.1380$ to $0.1174$), it is important to emphasize that the underlying solver is optimizing the scalarized objective $\mathcal{O}(x) = q \cdot \text{Risk} - \text{Return}$. At $q=0.5$, the substantial decrease in risk (from $3.3666$ to $3.1214$) mathematically outweighs the drop in return, driving the net objective lower. This behavior is consistent with the possibility that static decomposition underestimates some cross-cluster risk interactions; by omitting these penalties during the local solve stage, it may favor portfolios with higher individual returns but larger unmeasured global risk contributions. The static baseline point is reported explicitly as the shared initialization state at Iteration 0, clarifying that the subsequent trajectory reflects iterative refinement rather than an independent optimization run. We therefore interpret the reported percentage gains strictly under the optimized mean-variance objective and not as direct claims about Sharpe ratio or deployable trading performance.

\subsection{Scalability Across Problem Size}
To complement the dedicated $N=484$ benchmark above, we additionally ran a size sweep on subinstances sampled from the same 2016 benchmark with $N\in\{100,200,300,484\}$ and $K=N/2$, keeping the same SA backend and CAF hyperparameters. For each size, we performed 20 stochastic runs and compared CAF against the static decomposition baseline under identical settings.

These SA scaling results show that CAF retains a positive objective advantage across all tested scales, with mean relative improvements of $22.84\%\pm3.44\%$ ($N=100$), $10.52\%\pm3.05\%$ ($N=200$), $10.52\%\pm2.75\%$ ($N=300$), and $5.80\%\pm1.54\%$ on the independent $N=484$ scaling run. The canonical $N=484$ headline result in this paper remains the dedicated benchmark in Table~\ref{tab:benchmark}, which reports a $6.59\%$ average improvement over 20 runs; the separate $N=484$ scaling run is included only as a consistency check under independent stochastic draws. Runtime increases with problem size for both methods. At small scale, CAF is comparable to or slightly faster than the static baseline ($0.61$s vs. $0.71$s at $N=100$, reflecting the lower adaptive read budget used on the small communities), whereas at the largest tested size it incurs a larger coordination overhead ($34.70$s vs. $8.15$s at $N=484$). These results indicate that CAF's gain is not restricted to a single fixed dimension and remains positive across the tested scales as the decomposed instances grow larger.

To isolate the contributions of specific architectural components, we conducted an ablation study comparing the static baseline, CAF without the incremental context cache (evaluating the full $O(N^2)$ objective repeatedly), and the standard CAF variant with incremental update guarantees.

\begin{figure}[tp]
    \centering
    \includegraphics[width=0.98\linewidth]{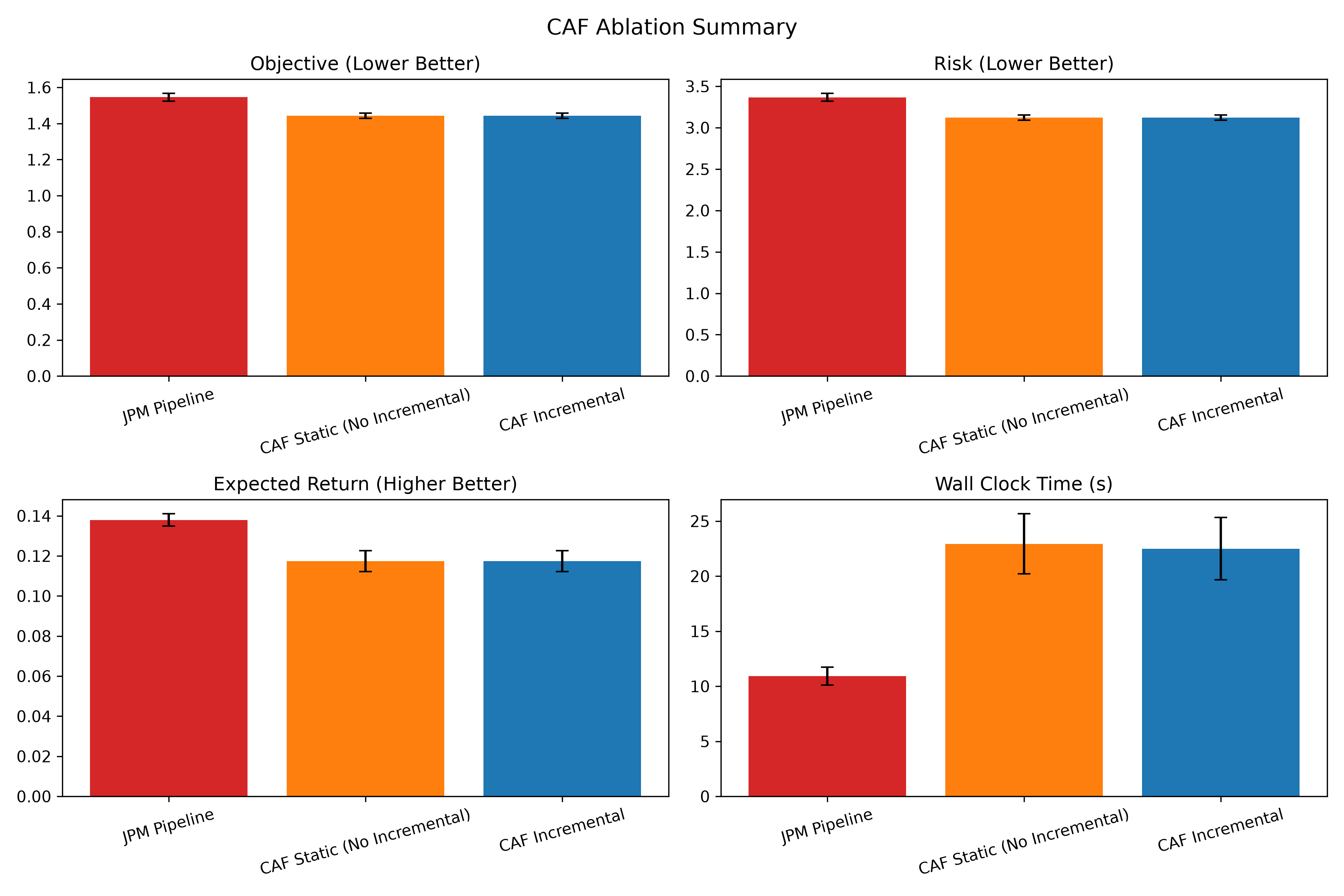}
    \caption{Ablation summary for the static baseline, CAF without cache, and standard CAF. Error bars denote $\pm 1$ standard deviation.}
    \label{fig:ablation}
\end{figure}

Fig.~\ref{fig:ablation} supports the interpretation that the context-aware iterative mechanism itself drives the objective gain, as the cached and non-cached CAF variants reach identical final portfolios. Replacing repeated full-state re-evaluation with an incremental $v$-vector update reduces the acceptance-step cost from a global $O(N^2)$ computation to a local $O(|C|^2)$ extraction, but yields only a modest runtime difference in the present experiments. This indicates that the cache mainly reduces classical coordinator overhead. The parallelizable proposal stage discussed earlier in the architecture description should therefore be interpreted as a deployment-level scalability opportunity under concurrent execution, rather than as an empirically benchmarked source of speedup in the present sequential implementation.

\begin{figure}[tp]
    \centering
    \includegraphics[width=0.98\linewidth]{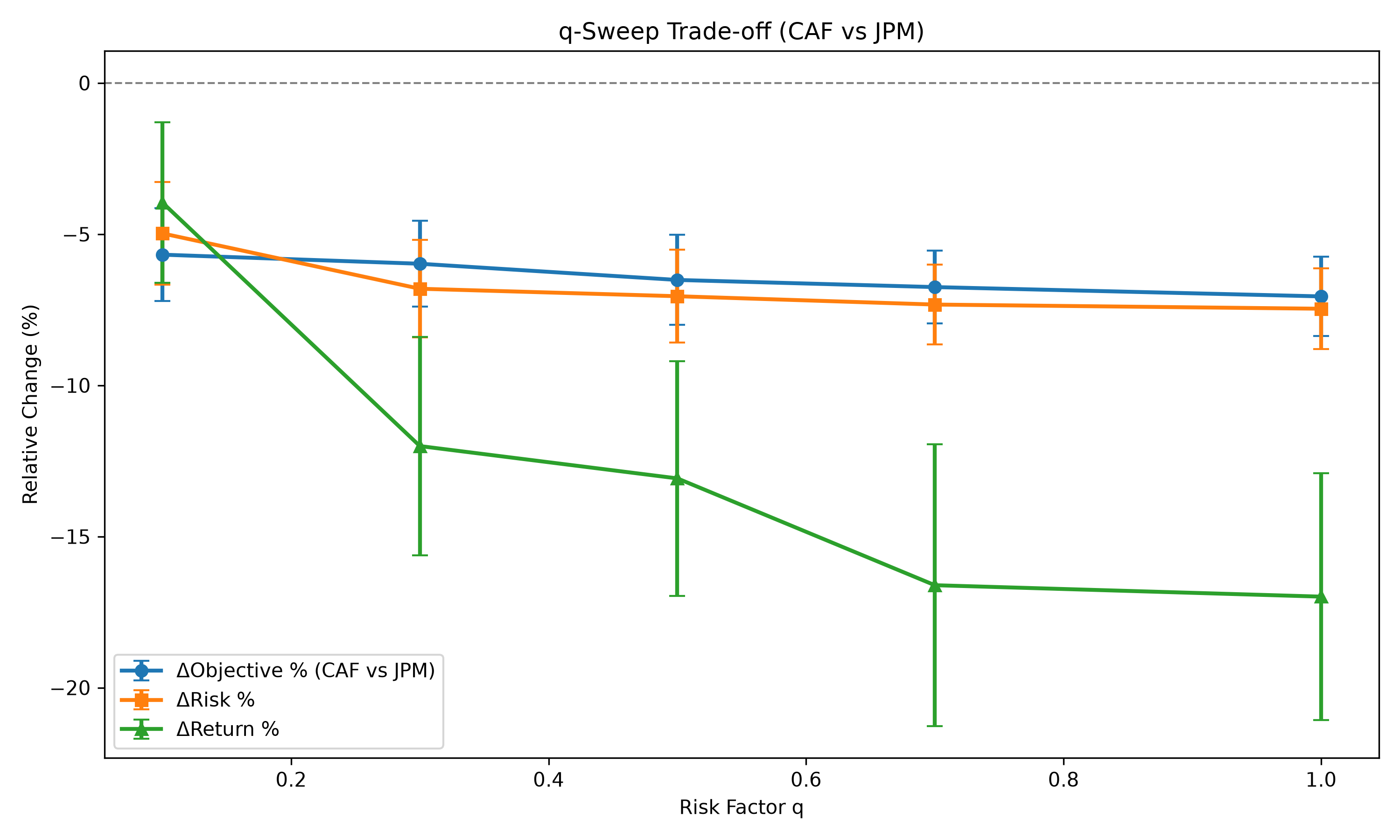}
    \caption{Relative changes versus the static baseline across risk-aversion levels $q$ on the 2016 benchmark under a shared initialization. Error bars denote $\pm 1$ standard deviation.}
    \label{fig:q_sweep}
\end{figure}

Finally, Fig.~\ref{fig:q_sweep} shows that CAF remains beneficial across the tested range of $q$ under a controlled initialization. Although the expected return component may worsen under higher risk aversion, the accompanying risk reduction is large enough to preserve an overall improvement in the scalarized objective $\mathcal{O}(x)$. Table~\ref{tab:benchmark} and Fig.~\ref{fig:q_sweep} are therefore directionally consistent but not numerically identical, because the former reports the dedicated $q=0.5$ benchmark whereas the latter sweeps $q$ during the CAF refinement phase while holding the same initial baseline solution fixed.

At fixed $N=200$, CAF also remains advantageous under different portfolio sparsity regimes. For $K/N=0.25$ ($K=50$), the mean objective improvement reaches $32.95\%$ ($\pm 5.88\%$), while for $K/N=0.50$ ($K=100$) the improvement is $10.52\%$ ($\pm 3.05\%$). This gap is consistent with the view that context-aware correction becomes especially useful in sparse-selection regimes, where local decomposition may be more likely to miss globally coupled risk penalties.

\subsection{Cross-Solver Validation with Gurobi}
We include Gurobi not as an additional deployment baseline, but as a deterministic local backend to assess whether CAF's gains persist beyond stochastic solver effects. This cross-check also anchors the validation in an industrially standard exact-solver regime considered in prior decomposition studies. Accordingly, we repeated the same decomposition and CAF pipeline using Gurobi as the local optimizer. In the present implementation, the local Gurobi solves are deterministic and the upstream spectral clustering path also returned the same partitioning outcome across repeated runs on these instances; the 20 repeated runs therefore serve as a consistency check rather than a stochastic sensitivity study, and all runs returned the same objectives. We report the corresponding means in Table~\ref{tab:gurobi_validation} for consistency with the paper-wide reporting format, but under this deterministic path each mean is identical to the corresponding single-run result. Under the same 2016 benchmark and $K=N/2$, CAF still improved the objective over static decomposition across all tested sizes: $23.45\%$ ($N=100$), $14.29\%$ ($N=200$), $15.15\%$ ($N=300$), and $11.87\%$ ($N=484$).

At fixed $N=200$, the Gurobi-based cardinality sweep shows the same sparsity trend as the SA backend: CAF yields a $43.07\%$ objective improvement at $K/N=0.25$, while the $K/N=0.50$ case coincides with the $N=200$ scaling entry above and therefore is not repeated separately in Table~\ref{tab:gurobi_validation}. Together with Fig.~\ref{fig:time_overhead}, these results support two points: the improvement is not specific to SA stochasticity, and the additional runtime behaves approximately as a constant-factor overhead under early stopping across the tested sizes.

\begin{figure}[tp]
    \centering
    \includegraphics[width=0.98\linewidth]{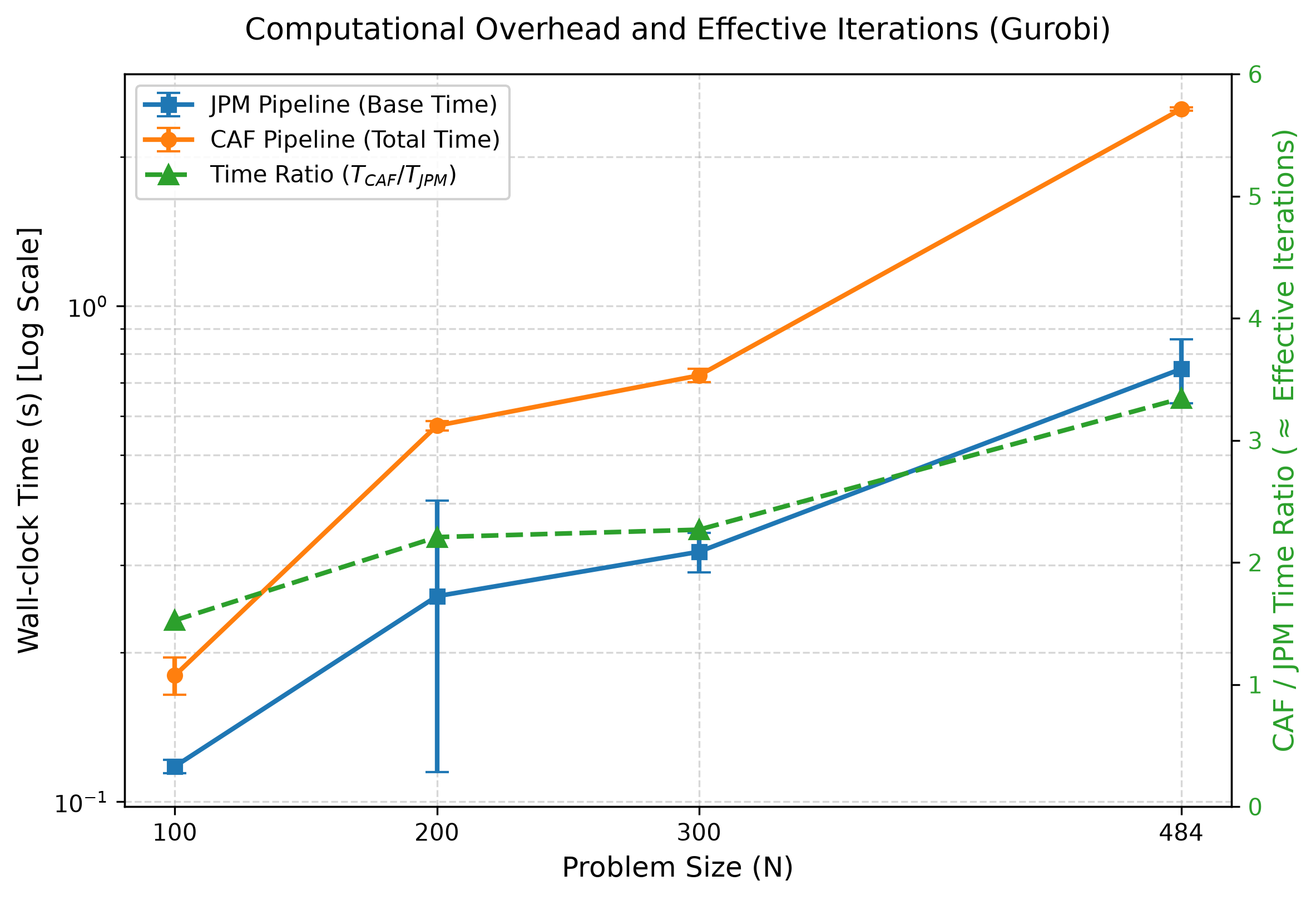}
    \caption{Gurobi runtime overhead and effective iterations across tested sizes. CAF incurs a consistent constant-factor coordination overhead while maintaining positive objective gains throughout the sweep.}
    \label{fig:time_overhead}
\end{figure}

\begin{table}[tp]
\caption{Gurobi cross-solver validation results ($q=0.5$)}
\label{tab:gurobi_validation}
\begin{center}
\resizebox{\columnwidth}{!}{%
\begin{tabular}{lccc}
\toprule
\textbf{Setting} & \textbf{Static Obj. (Mean)} & \textbf{CAF Obj. (Mean)} & \textbf{Mean Improv. (\%)} \\
\midrule
\multicolumn{4}{c}{\textit{Scaling Sweep ($K = N/2$)}} \\
$N=100$ & $0.0310$ & $0.0237$ & $23.45\%$ \\
$N=200$ & $0.2033$ & $0.1742$ & $14.29\%$ \\
$N=300$ & $0.5479$ & $0.4649$ & $15.15\%$ \\
$N=484$ & $1.3138$ & $1.1579$ & $11.87\%$ \\
\midrule
\multicolumn{4}{c}{\textit{Sparsity Sweep ($N = 200$)}} \\
$K=50$ ($0.25N$) & $0.0323$ & $0.0184$ & $43.07\%$ \\
\bottomrule
\end{tabular}%
}
\par\smallskip\footnotesize \textit{Note:} Entries are reported as means over 20 repeated runs for consistency with the paper-wide reporting format; under the current deterministic Gurobi path, all runs returned the same objectives, so these means equal the corresponding single-run results. The $K=100$ ($0.50N$) sparsity case coincides with the $N=200$ scaling entry above (same fixed subset, $K=N/2$) and is omitted to avoid duplication.
\end{center}
\end{table}

\subsection{Context-Induced Landscape Reshaping}

To quantitatively validate the mechanism behind CAF, we conducted a dedicated suppression analysis on the same 2016 benchmark family with $N\in\{100,200,300,484\}$, $K=N/2$, and 10 randomized runs per size (350 community-level samples in total). For each community $C$, we compare the fraction of assets whose local return-side linear term is non-positive before context injection ($\mu_C \le 0$) and after context injection ($\tilde{\mu}_C \le 0$).

CAF increases the suppressed-asset ratio by $87.95$ percentage points on average (std $10.62$), with a 10th--90th percentile interval of $[80.0,\ 100.0]$ percentage points. The effect is similar across scales: $86.69$pp ($N=100$), $88.33$pp ($N=200$), $87.24$pp ($N=300$), and $89.35$pp ($N=484$). Thus, context injection makes a large fraction of local assets less attractive under the adjusted local linear term before local optimization begins, as summarized in Fig.~\ref{fig:sparsity_presolve}.

\begin{figure}[tp]
    \centering
    \includegraphics[width=0.98\linewidth]{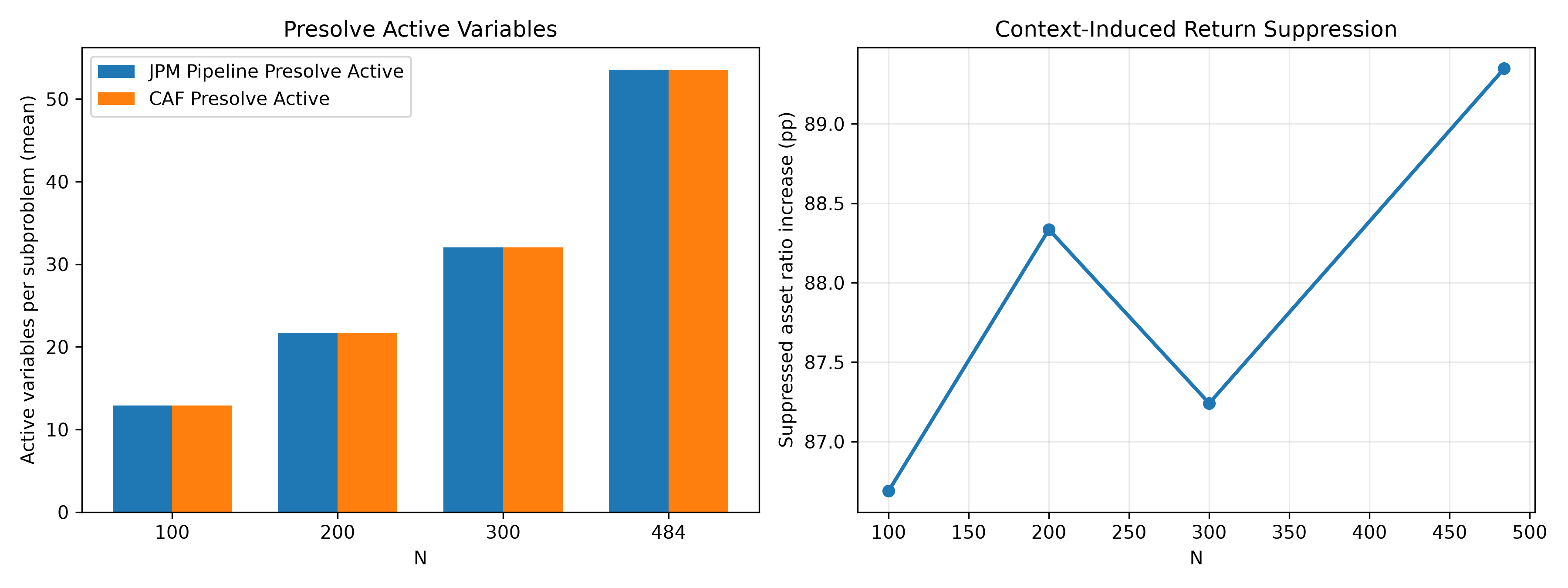}
    \caption{Context-induced landscape reshaping. Left: solver-presolve active-variable counts remain similar. Right: the suppressed-asset ratio $(\tilde{\mu}_C \le 0)$ increases substantially after context injection.}
    \label{fig:sparsity_presolve}
\end{figure}

The left panel provides an auxiliary solver-presolve diagnostic and shows that presolve does not materially reduce the formal variable count in our current formulation. Accordingly, the main observed effect of CAF is not literal variable elimination but a reshaping of the local energy landscape through the adjusted linear term $\tilde{\mu}_C$. For SA or QA backends, this suggests that the practically relevant candidate set becomes more concentrated even when the symbolic model dimension is unchanged, which is consistent with the observed gains being concentrated in the early folding iterations.

We further isolated this mechanism on a frozen $N=40$ CAF state by selecting a hardware-feasible local community with $n=5$ and $K=3$ and comparing the exact local update with and without context injection. On this controlled subproblem, the exact solve without injection gives $\Delta \mathcal{O}=0$ and is therefore rejected, whereas the injected exact solve gives $\Delta \mathcal{O}=-1.0536\times 10^{-4}$ and is accepted. The same injected target is also recovered by warm-start QAOA on the ideal simulator. This frozen-state ablation complements the aggregate suppression analysis above by providing a controlled example in which the injected context term changes the update decision on an individual step, consistent with the broader landscape-reshaping interpretation.

\subsection{Quantum Backend Compatibility on Folded Subproblems}

After establishing the large-scale behavior of CAF with SA and exact classical solvers, we next examine whether the same global-context-injection mechanism remains compatible with quantum simulators and real quantum hardware on folded subproblems. We first consider matched-backend compatibility on folded $N=40$ instances and then turn to frozen local hardware validation on selected warm-started subproblems. Here QAOA serves only as a small-scale gate-based compatibility study rather than as a primary benchmark backend. To define a reproducible evaluation target, we systematically scanned global portfolio sizes $N \in \{24, 32, 40, 48, 56, 64, 72, 80\}$ across 5 independent random seeds. We selected $N=40$ as the representative test point because it is the largest global dimension in this scan for which the maximum detected community size remained consistently bounded by $n \le 16$ across all 5 evaluated seeds. This bound is required by the exact statevector simulator used as the local subroutine, whose memory cost grows as $O(2^n)$.

To assess how the gate-based compatibility evidence changes with circuit depth, we conducted matched-backend comparisons at depths $p=2$ and $p=3$ across 20 stochastic runs, where the static baseline and CAF use the same local QAOA setting within each depth. Here, ``matched'' refers to using the same local solver family and hyperparameter budget on the same folded instance; it does not denote a direct runtime comparison against the SA main benchmark. Plain QAOA (no warm-start) improves the global objective in all 20 runs at both depths. At $p=2$, the mean absolute objective delta is $-5.95 \times 10^{-4} \pm 2.72 \times 10^{-4}$, with a mean risk reduction of $13.56\% \pm 4.39\%$ and a mean return change of $-7.83\% \pm 2.66\%$. At $p=3$, the mean absolute objective delta is $-4.91 \times 10^{-4} \pm 1.78 \times 10^{-4}$, with a mean risk reduction of $13.84\% \pm 3.74\%$ and a mean return change of $-8.71\% \pm 2.70\%$. Increasing the depth from $p=2$ to $p=3$ does not yield a statistically meaningful additional gain on these subproblems, indicating that $p=2$ already suffices here.

We next evaluated an annealing-style simulated backend using D-Wave's \texttt{PathIntegralAnnealingSampler} \cite{martonak2002pimc,dwave_samplers_docs}, a path-integral Monte Carlo (PIMC) realization of simulated quantum annealing, under the same matched CAF-vs-static comparison. CAF improves 18 of 20 runs, with a mean objective delta of $-6.10 \times 10^{-4} \pm 4.36 \times 10^{-4}$, a $10.68\% \pm 5.62\%$ risk reduction, and a $-5.68\% \pm 3.85\%$ return change. The QAOA and SQA results in Table~\ref{tab:qaoa_feasibility} are thus compatibility evidence, not a performance benchmark against the SA mainline: context injection remains operable across both gate-based and annealing-style local solvers while preserving the outer-loop acceptance logic. Here the gate-based columns use plain QAOA (no warm-start); the hardware-in-the-loop study below uses a separate warm-started protocol. Because instance scale and backend dynamics differ from the $N=484$ benchmark, the return percentages are only supplementary small-instance indicators; the stable cross-setting quantity is the absolute objective delta, on the order of $10^{-4}$.

\begin{table}[tp]
\caption{Quantum Backend Compatibility on a Folded $N=40$ Instance}
\label{tab:qaoa_feasibility}
\begin{center}
\resizebox{\columnwidth}{!}{%
\begin{tabular}{@{}lccc@{}}
\toprule
\textbf{Metric} & \textbf{QAOA ($p=2$)} & \textbf{QAOA ($p=3$)} & \textbf{SQA (PIMC)} \\
\midrule
Global Assets ($N$) & 40 & 40 & 40 \\
Max Subproblem Size ($n$) & 16 & 16 & 16 \\
Improving Runs (Wins) & 20 / 20 & 20 / 20 & 18 / 20 \\
Mean Objective $\Delta$ & $-5.95 \times 10^{-4} \pm 2.72 \times 10^{-4}$ & $-4.91 \times 10^{-4} \pm 1.78 \times 10^{-4}$ & $-6.10 \times 10^{-4} \pm 4.36 \times 10^{-4}$ \\
Mean Risk Reduction & $13.56\% \pm 4.39\%$ & $13.84\% \pm 3.74\%$ & $10.68\% \pm 5.62\%$ \\
Mean Return Change & $-7.83\% \pm 2.66\%$ & $-8.71\% \pm 2.70\%$ & $-5.68\% \pm 3.85\%$ \\
\bottomrule
\end{tabular}%
}
\par\smallskip\footnotesize \textit{Note:} Each column compares CAF with a matched static baseline under the same local backend and hyperparameter budget on the folded $N=40$ instance. Positive ``Mean Risk Reduction'' means lower risk under CAF. ``Mean Return Change'' is reported only as a supplementary small-instance indicator and is not intended for direct comparison with the main $N=484$ SA benchmark.
\end{center}
\end{table}

We next consider frozen local hardware validation on Origin Wukong 180 (backend \texttt{WK\_C180}). These warm-started hardware-in-the-loop results are reported as a local-update executability demonstration rather than as a same-protocol comparison with Table~\ref{tab:qaoa_feasibility}. The gate-based workflow is warm-started throughout; for each frozen hardware target, we first optimize the QAOA parameters once on the ideal simulator and then execute the same ansatz with the same $\theta^\star$ on hardware. Although the platform supports amend-based readout mitigation \cite{nation2021m3}, the reported candidate bitstring and $\Delta \mathcal{O}$ are computed from raw measurement counts, so the executability claim reflects what the hardware actually sampled.

The decoder selects, directly from the raw measurement counts, the feasible bitstring with the best local surrogate objective---the same rule used in the warm-started simulator protocol. In the ideal simulator the warm-start ansatz concentrates essentially all amplitude on the optimum; on hardware, noise disperses this concentration. For the $n=5$ frozen subproblem the optimum survives in only a small fraction of raw shots but is still selected as the best feasible candidate, so the hardware candidate matches the simulator optimum and yields the same accepted global objective decrease. Larger probes ($n=7,9,10$) no longer sample the optimum reliably under raw shot counts, so we restrict the stable hardware claim to $n=5$.

\begin{table}[tp]
\caption{Additional Frozen Hardware Validation on Origin Wukong 180 (backend \texttt{WK\_C180})}
\label{tab:hardware_local_updates}
\begin{center}
\resizebox{\columnwidth}{!}{%
\begin{tabular}{@{}lcccc@{}}
\toprule
\textbf{Frozen Subproblem} & $\mathbf{\Delta O_{\mathrm{sim}}}$ & $\mathbf{\Delta O_{\mathrm{hw}}}$ & \textbf{Accepted} & \textbf{Hamming} \\
\midrule
$n=5$, $K=3$ & $-1.0536 \times 10^{-4}$ & $-1.0536 \times 10^{-4}$ & Yes & 0 \\
\bottomrule
\end{tabular}%
}
\par\smallskip\footnotesize \textit{Note:} The reported $n=5$ update uses the warm-started local-update protocol on a frozen CAF state, executed on backend \texttt{WK\_C180}, and decodes the candidate from raw measurement counts under the same surrogate-objective rule as the simulator. Larger probes ($n=7,9,10$) do not reliably sample the optimum under raw shot counts and are not claimed as stable hardware evidence.
\end{center}
\end{table}

Thus, the hardware evidence supports a limited but concrete claim: under raw shot counts, a warm-started local CAF update can be executed on real hardware for the small $n=5$ frozen subproblem, while larger probes ($n=7,9,10$) do not reliably preserve the optimum under current hardware noise, so the stable executability claim is restricted to $n=5$.

\section{Conclusion}

The Context-Aware Folding (CAF) architecture shows that part of the global information lost after static partitioning can be reintroduced through global context injection. On the 484-asset 2016 benchmark ($q=0.5$), CAF improves the scalarized objective by $6.59\%$ in all 20 paired runs, and by $0.2579\%$ (17/20 wins) on an additional 2018 panel, so the gain persists across market regimes though its magnitude varies with the headroom left by the static decomposition. Cross-solver Gurobi experiments show the same advantage across scale and sparsity, confirming the gain stems from coordination rather than SA stochasticity, and the suppression analysis indicates CAF reshapes the local landscape rather than reducing the formal problem dimension. Matched $N=40$ studies show CAF remains compatible with quantum simulators and real quantum hardware, and the frozen hardware study demonstrates a warm-started local update on Wukong 180 at $n=5$ (larger probes do not reliably sample the optimum under raw shot counts). More broadly, CAF indicates that the gap between one-shot static decomposition and full-matrix optimization admits a lightweight coordination layer that is largely solver-agnostic, with the quantum contribution kept at the level of local compatibility and limited executability rather than advantage. CAF remains dependent on the upstream community partition, and this work does not benchmark concurrent execution on real hardware or address out-of-sample financial evaluation; future work will extend local-solver comparisons, hardware targets, and asynchronous context updates.

\section*{Acknowledgment}
The authors used AI-based language assistance tools during manuscript preparation for wording refinement and editorial suggestions. All technical content and final manuscript decisions were verified and approved by the authors.

\FloatBarrier

\bibliographystyle{IEEEtran}
\bibliography{references}

\end{document}